\documentclass[10pt,conference,compsocconf,letterpaper]{IEEEtran}
\usepackage[cmex10]{amsmath}
\interdisplaylinepenalty=2500
\usepackage{amssymb}
\usepackage{amsthm}
\usepackage{graphicx}
\usepackage{tikz}
\usepackage[font=footnotesize,caption=false]{subfig}

\newcommand{\comment}[1]{}
\newcommand{\E}[1]{\mathbf{E}\left[#1\right]}

\newcommand{\var}[1]{\mathbf{var}\left[#1\right]}

\newcommand{\HarmonicN}[1]{\mathrm{H_{#1}}}

\newcommand{\lambertW}{\mathcal{W}}


\newcommand{\komentarz} [1]{}

\newtheorem{theorem}{Theorem}
\newtheorem{lemma}{Lemma}
\makeatletter
\dimendef\pl@left=0 \dimendef\pl@down=1 
\dimendef\pl@right=2 \dimendef\pl@temp=3
\def\sob#1#2#3#4#5{
\setbox0\hbox{#1}\setbox1\hbox{$_\mathchar'454$}\setbox2\hbox{p}%
\pl@right=#2\wd0 \advance\pl@right by-#3\wd1
\pl@down=#5\ht1 \advance\pl@down by-#4\ht0
\pl@left=\pl@right \advance\pl@left by\wd1
\pl@temp=-\pl@down \advance\pl@temp by\dp2 \dp1=\pl@temp
\leavevmode\kern\pl@right\lower\pl@down\box1\kern-\pl@left #1}
\def\eob{\sob e{.50}{.35}{0}{.93}}
\makeatother

\def\zbyszek{Zbigniew Go{\l}\eob biewski}
\IEEEoverridecommandlockouts

\title{On spreading rumor in heterogeneous systems
\thanks{
Partially supported by EU within the 7th Framework Program under contract 215270 (FRONTS).
This paper was also supported by funds from Polish Ministry of Science and Higher Education -- grant No.  N~N206~1842~33 (first and fourth author), N~N206~2573~35 (second author).}
}


\author{\IEEEauthorblockN{Jacek Cicho\'{n},
\zbyszek,
Marcin Kardas,
Marek Klonowski,
Filip Zag\'orski
}
\IEEEauthorblockA{Faculty of Fundamental Problems of Technology,\\ 
Wroc{\l}aw University of Science and  Technology, Poland}
E-mail: \{firstname.lastname\}@pwr.wroc.pl}

\begin{document}
\maketitle

\begin{abstract}

In this paper we consider a model of spreading information in heterogeneous systems wherein we have two kinds of objects. Some of them are active and others are passive. Active objects can, if they possess information, share it with an encountered passive object. We focus on a particular case such that  active objects communicate independently with randomly chosen passive objects. 

Such model is motivated by two real-life scenarios. The first one is a very dynamic  system of  mobile devices distributing information among stationary devices. The second is an architecture wherein clients communicate with several servers and can leave some information learnt from other servers.

 The main question we investigate is how many rounds is needed to deliver the information to all objects under the assumption that at the beginning exactly one object has the  information?

In this paper we provide mathematical models of such process and show rigid and very precise mathematical analysis for some special cases  important from practical point of view. Some mathematical results are quite surprising ---  we find relation of investigated process to both  coupon collector's problem as well as the birthday paradox. Additionally, we present simulations for showing behaviour for general parameters. 

\end{abstract}

\section{Introduction}

Let us consider the following problem --- we have a heterogeneous system with two  kinds of objects --- namely, \textit{ passive} and \textit{active} ones. In the system information is spread among both kinds of objects. Initially only one object has the information --- it is a source of information. Active objects initiate communication with passive objects. During communication if one object has information they share it. In effect, when the communication is finished --- both objects have the information. The main problem we investigate in our paper is, how many communication rounds is needed to have the information in all objects? In this paper we consider the basic model --- active objects initiate communication independently with randomly chosen passive objects. 
We see this model as a formal description of several real-live scenarios. In particular we have in mind two kinds of systems.  The first one is a   system of  mobile devices distributing information among stationary devices. Of course the assumption that the mobile device initiates the communication with a random stationary object is far-fetched in many real systems. However, it is justified in very dynamic systems where the communication occurs relatively rarely when compared to movement rate. In terms of random walks on graphs theory we can express formally this condition by demanding the period between communications longer than the mixing time of the walk (i.e. movement of the mobile object).

The second scenario that we have in mind is an architecture wherein clients communicate with several servers and can leave information learnt from other servers. The process is finished when all users are informed about the 
information and moreover the information is accessible from all servers. 

To make the presentation more clear we describe our results  in the language of mobile devices. It is clear, however, that it can be seen also in terms of other models.

\begin{figure*}[!t]
 \tikzstyle{mobile}=[circle,draw=blue!50,fill=blue!20,thick,inner sep=0pt,minimum size=4mm]
 \tikzstyle{mobileinf}=[circle,draw=red!70,fill=red!20,thick,inner sep=0pt,minimum size=4mm]
 \tikzstyle{static}=[rectangle,draw=black!50,fill=black!20,thick,inner sep=0pt,minimum size=4mm]
 \tikzstyle{staticinf}=[rectangle,draw=red!70,fill=red!30,thick,inner sep=0pt,minimum size=4mm]
 \tikzstyle{connection}=[<->]
 \tikzstyle{infection}=[<->,draw=red!70]

\def\drawmobile#1{
  \foreach \x/\s in #1
  {
    \newcount\a\a=\x\advance\a by-3\multiply\a by-40
    \node (mob\x) at (\the\a:2cm) [mobile\s] {\x};
  }}

\def\drawstatic#1,#2,#3\koniec{
  \node (st1) at (-1,0) [static#1] {1};
  \node (st2) at ( 0,0) [static#2] {2};
  \node (st3) at ( 1,0) [static#3] {3};}

  \begin{center}
  \subfloat[initialization]
  {
  \label{fig_model:init}
  \begin{tikzpicture}[scale=0.8]
    \drawmobile{{1/,2/,3/,4/,5/,6/,7/,8/,9/}}
    \drawstatic,inf,\koniec
    \draw [connection] (mob9) -- (st3);
  \end{tikzpicture}
  }
  \qquad
  \subfloat[first infection]
  {
  \label{fig_model:inf}
  \begin{tikzpicture}[scale=0.8]
   \drawmobile{{1/inf,2/,3/,4/,5/,6/,7/,8/,9/}}
   \drawstatic ,inf,\koniec
   \draw[infection] (mob1) -- (st2);
  \end{tikzpicture}
  }
  \qquad
  \subfloat[intermediate state]
  {
  \label{fig_model:int}
  \begin{tikzpicture}[scale=0.8]
   \drawmobile{{1/inf,2/,3/inf,4/,5/inf,6/inf,7/,8/,9/inf}}
   \drawstatic inf,inf,\koniec
   \draw[connection] (mob5) -- (st1);
  \end{tikzpicture}
  }
  \qquad
  \subfloat[termination]
  {
  \label{fig_model:term}
  \begin{tikzpicture}[scale=0.8]
   \drawmobile{{1/inf,2/inf,3/inf,4/inf,5/inf,6/inf,7/inf,8/inf,9/inf}}
   \drawstatic inf,inf,inf,\koniec
  \end{tikzpicture}
  }
  \end{center}
  \caption{Example of a rumor spreading process in various states. Mobile and static objects are represented by circles and squares respectively. 
(a) At the beginning,
one object knows the information (static object marked by red). Nodes communicates randomly with no information transmission until some mobile object connects with
infected static object (b). 
From know on, nodes randomly exchanges the rumor (c).
The process stops when all of the objects know the information (d)
}
\end{figure*}

The mathematical model of described system can be a bipartite graph $S\cup M$.
At the beginning, exactly one node from the set $S$ (stationary objects) contains some information.
At each step we choose uniformly at random one node
from the set $S$ and one from the set $M$ (mobile objects). 
If one of the chosen nodes is informed then the second becomes informed too. 
Thus, the considered model is a kind of a \textit{push and pull} rumor spreading scheme.
We are interested in the expected time of total spreading of the information.
We shall show  a very precise result for the case $|M|\leq 3$ and $|S|=n$ and a quite precise result 
for the case $|M| = |S| = n$. 
We also discuss natural continuous analogue of this process as well as some motivation for heterogeneous networks containing some number of  mobile objects.

\subsection{Organization of this paper}

We briefly describe related results --- with particular attention to rumor spreading 
--- in section~\ref{sec:related}. Mathematical model is presented in section~\ref{sec:model}.
Next, in sections~\ref{sec:2bees} and \ref{sec:3bees}, we analyze the cases with small number of mobile objects and we show relation of investigated problem to both birthday paradox and coupon collector's problem.
Section~\ref{sec:equal} is devoted to the case with equal number of clients and servers. Some intuitions about the general case and results of simulations are presented in section\ref{sec:general}. Continuous version of our model is discussed in section~\ref{sec:distributed}.

\section{Related work}
\label{sec:related}
Rumor spreading  has a rich literature. Let us describe one important in this domain result.
In Frieze and Grimmet~\cite{FriezeGrimmet85} it is assumed that
the rumor is initially transmitted to one member of the population and in each round 
each person that knows  the rumor  
passes it to someone chosen at random (it is the so-called {\it push} scheme in a complete graph of size $n$).
The process stops when all people know the rumor. Let $S_n$ 
denote the number of rounds after which the process stops.
It is showed in~\cite{FriezeGrimmet85} that in probability
$$
  S_n = \log_2 n + \ln n + o(\ln n)
$$
when $n$ tends to infinity (i.~e. $\lim_{n \to \infty} \Pr(|S_n - \log_2 n + \ln n| \leq \omega(n)) = 1$ where $\omega(n) = o(\ln n)$).
This result was later improved by Pittel \cite{Pittel87} who showed that in probability
$$
  S_n = \log_2 n + \ln n + O(1) 
$$
when $n$ tends to infinity (i.~e. $\lim_{n \to \infty} \Pr(|S_n - \log_2 n + \ln n| \leq \omega(n)) = 1$ where $\omega(n) = O(1)$). Notice that
\begin{equation}
\label{eq:Pittel}
\log_2 n + \ln n = \left(1+\frac{1}{\ln 2}\right) \ln n \approx 2.4427 \ln n.
\end{equation}

A further idea introduced by Demers et al. in \cite{DBLP:conf/podc/DemersGHIL87} is to send rumors from the called to the calling person ({\it pull} scheme in a complete graph of size $n$).
It was observed that the number of uninformed people decreases much faster using a {\it pull} scheme instead of a {\it push} scheme, namely only $T_n = \Theta(n \ln \ln n)$ calls (transmissions) are needed to distribute the rumor among all players.

In the next crucial paper \cite{DBLP:conf/focs/KarpSSV00} Karp et al. analyzed gossiping protocol which can be briefly described as a combination of {\it push and pull} schemes in a complete graphs of size $n$.
Authors showed that  $S_n = O(\ln n)$.

There are many other important papers in this subject. Let us only mention about
\cite{DBLP:journals/rsa/FeigePRU90},
\cite{DBLP:conf/spaa/Elsasser06},
\cite{DBLP:conf/podc/BerenbrinkEF08},
\cite{1347167}.

Bipartite graphs are suitable to model common client-server architecture.
Moreover, e.g. a social networking websites scheme is a great example of the system where only ``clients'' spread a message among ``servers'' (and therefore also other ``clients'').

\section{Description of Model}
\label{sec:model}
Let $G$ = $(S \cup M,E)$ where $E=S \times M$. 
We call the vertices from the set $S$  \textit{static nodes} 
and we call the vertices from $M$ the  \textit{mobile nodes}.
The only possible way of information exchange is between mobile and static ones ({\it push}\&{\it pull} scheme).
Suppose that one static node has an information item which should be delivered to
all other (mobile and static) nodes. Hence we are interested in a broadcasting 
problem for networks modeled as bipartite graph.
At each step one static and one mobile node is chosen randomly and independently 
according to uniform distributions on $S$ and $M$. If one of them has the information item
then the second will obtain it too.

Let us assume for a while that there is only one mobile node and $n$ static nodes.
Then the evolution of the system
is divided into two phases: time before the first mobile object hits the node 
with the information item (which takes on average $n$ steps) and the dissemination information phase when the mobile
node infects the whole population of static nodes. 
The analysis of the second
phase reduces to the analysis of the Coupon Collector's problem, so the second phase takes in average
$n \HarmonicN{n-1}$ 
steps.  Hence, if $T_n$ is the total number of steps until
the information item is disseminated onto the whole graph, then $\E{T_n} = n\HarmonicN{n-1} + n$.

Let $n$ denote the number of static objects and 
let $m$ denote  the number of mobile objects.
We model our process as a Markov chain with the state space $\Omega^{0}_{n,m} = \{0,\ldots,n\}\times \{0,\ldots,m\}$
and the initial state $(1,0)$.
There are only three possible transitions in this model: 
$(a,b) \rightarrow (a,b)$, $(a,b) \rightarrow (a+1,b)$ and
$(a,b) \rightarrow (a,b+1)$ and we have
$$
\left\{
  \begin{array}{lcl}
  \Pr[(a,b) \rightarrow (a+1,b)] &=&  \frac{b}{m}\cdot \frac{n-a}{n}\\  
  \Pr[(a,b) \rightarrow (a,b+1)] &=&  \frac{m-b}{m}\cdot \frac{a}{n}
  \end{array}
\right.  
$$
and
$\Pr[(a,b) \rightarrow (a,b)] = 1- (\Pr[(a,b) \rightarrow (a+1,b)] + \Pr[(a,b) \rightarrow (a,b+1)])$.

Notice that $\Pr[(1,0) \rightarrow (2,0)] = 0$ and $\Pr[(1,0) \rightarrow (1,1)] = \frac{1}{n}$,
so the only possible evolution of this Markov chain is through the transition $(1,0) \rightarrow (1,1)$.
Therefore we can distinguish two phases in our process:
\begin{description}
 \item [Phase 1]: until the moment of transition from state $(1,0)$ to state $(1,1)$,
 \item [Phase 2]: from state $(1,1)$ until state $(n,m)$ 
\end{description}
The lengths of these two phases are stochastically independent.
Let $F_{n,m}$ be a random variable that measures 
the run time of Phase~1, $L_{n,m}$ be a random variable that measures 
the run time of Phase~2 and let $T_{n,m} = F_{n,m} + L_{n,m}$. 
Then $F_{n,m} \thicksim Geo\left(\frac1n\right)$.
Thus $\E{F_{n,m}} = n$ and $\var{F_{n,m}} = n^2 \left(1 - \frac1n\right)$.
Now we may restrict our considerations to the variable $L_{n,m}$ which 
counts the length of a run through the Markov chain with the state space $\Omega_{n,m} = \{1,\ldots,n\}\times \{1,\ldots,m\}$ 
and the initial state $(1,1)$.

Notice that $\Pr_{n,m}((a,b) \rightarrow (a+1,b)) = \Pr_{m,n}((b,a) \rightarrow (b,a+1))$. 
This observation implies that $\E{L_{n,m}} = \E{L_{m,n}}$ (see Figure~\ref{fig_symmetry}).
\begin{figure}
 \tikzstyle{mobile}=[circle,draw=blue!50,fill=blue!20,thick,inner sep=0pt,minimum size=4mm]
 \tikzstyle{mobileinf}=[circle,draw=red!70,fill=red!20,thick,inner sep=0pt,minimum size=4mm]
 \tikzstyle{static}=[rectangle,draw=black!50,fill=black!20,thick,inner sep=0pt,minimum size=4mm]
 \tikzstyle{staticinf}=[rectangle,draw=red!70,fill=red!30,thick,inner sep=0pt,minimum size=4mm]
 \tikzstyle{connection}=[<->]
 \tikzstyle{infection}=[<->,draw=red!70]

\def\drawobjects#1#2#3#4#5{
\begin{tikzpicture}[scale=0.8]
 \foreach \x/\s in {1/inf,2/,3/inf,4/inf,5/}
   {
    \newcount\a\a=\x\multiply\a by-72\advance\a by162\advance\a by#5
    \node (#1\x) at (\the\a:#3) [#1\s] {\x};
   }
  \foreach \x/\s in {1/,2/inf,3/inf}
  {
   \newcount\a\a=\x\multiply\a by-120\advance\a by30\advance\a by#5
   \node (#2\x) at (\the\a:#4) [#2\s] {\x};
  }
  \draw[connection] (#15) edge [bend right] (#21);
\end{tikzpicture}}

 \begin{center}
 \subfloat[]
 {
  \drawobjects{mobile}{static}{2cm}{0.8cm}{0}
 \label{fig_symmetry1}
 }
 \qquad
 \subfloat[]
 {
  \label{fig_symmetry2}
  \drawobjects{static}{mobile}{0.8cm}{2cm}{180}
 }
 \end{center}
  \caption{Symmetry of rumor spreading process. From state $(1,1)$ (i.e. after initialization phase) process with $n$ static and $m$ mobile objects behaves in the same fashion
as symmetrical process with $m$ static and $n$ mobile objects. 
(a) Process with 5 mobile and 3 static objects in state $(2,3)$ and 
(b) corresponding process
with 3 mobile and 5 static objects in state $(3,2)$. Both marked connections (causing no information exchange) have the same probability of being chosen.}
 \label{fig_symmetry}
\end{figure}


\section{Two Mobile Objects}
\label{sec:2bees}
In this section we assume there are precisely two mobile
stations, i.e. investigate the Markov chain $\Omega_{n,2}$.
In order to simplify notation we write $T_n$ instead of $T_{n,m}$ etc.

Quite easy arguments can show that 
$n + n \HarmonicN{n} \leq \E{T_n} \leq \frac 32 n + n \HarmonicN{n}$.
We shall show in this section a much more precise estimation of $\E{T_n}$.
\bigskip

The probabilities of transitions between states in the considered Markov chain are given by formulas
$$
\left\{
  \begin{array}{lcl}
  \Pr[(a,1) \rightarrow (a+1,1)] &=& \frac{n-a}{2n}\\
  \Pr[(a,1) \rightarrow (a,2)]   &=& \frac{a}{2n}\\
  \Pr[(a,1) \rightarrow (a,1)] &=& \frac{1}{2}
  \end{array}
\right.
$$

\begin{figure}[!t]
  \begin{center}
    \begin{tikzpicture}[scale=0.625]
  \filldraw (0,0) circle (2pt);
  \tikzstyle{myarrow}=[->,>=stealth,shorten >=1pt,thick];
  \foreach \x in {0,...,10}
  {
    \filldraw (\x,1) circle (2pt);
    \filldraw (\x,2) circle (2pt);
    \begin{scope}[style=myarrow]
      \draw (\x,1) -- +(0,1);
    \end{scope}
  }
  \foreach \x in {0,...,9}
  {
    \draw[style=myarrow] (\x,1) -- +(1,0);
    \draw[style=myarrow] (\x,2) -- +(1,0);
  }
  \draw[style=myarrow] (0,0) -- (0,1);
  \draw (0,0) node [left] {$(1,0)$};
  \draw (0,1) node [left] {$(1,1)$};
  \draw (0,2) node [left] {$(1,2)$};
  \draw (10,1) node [right] {$(10,1)$};
  \draw (10,2) node [right] {$(10,2)$};
\end{tikzpicture}
  \end{center}
  \caption{States of the model with two mobile objects and 10 static objects. Loops are omitted.}
  \label{fig:2bees_model}
\end{figure}
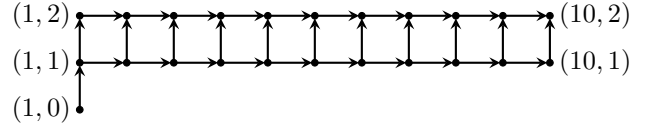

We restrict for a while our attention only to the moments, when the evolution of our Markov chain occurs (i.~e. we consider the  pure jump Markov chain).
For the resulting pure jump Markov chain we have
$$
\left\{
  \begin{array}{lcl}
    \tilde{\Pr}[(a,1) \rightarrow (a+1,1)] &=& 1-\frac{a}{n}\\
    \tilde{\Pr}[(a,1) \rightarrow (a,2)]   &=& \frac{a}{n}
  \end{array}
\right.  
$$

Let $p_a$ 
denote 
 the probability of the evolution through the path 
\begin{equation}
  \label{eq:2bees:path}
  (1,1)\rightarrow \ldots \rightarrow (a,1) \rightarrow (a,2) \rightarrow (a+1,2) \rightarrow \ldots \rightarrow (n,2)
\end{equation}
Then $p_1 = \frac{1}{n}$ and
\begin{equation}
  p_a = \frac{a}{n} \prod_{k=1}^{a-1} \left(1- \frac{k}{n}\right) \label{pa}
\end{equation}
for $a>1$. Let us fix a parameter $a$ and let us calculate the expected 
number 
of steps ($L_{n}^{(a)}$)
which will take in the original chain to evolve through the path given by Formula~\ref{eq:2bees:path}.
Notice that $\Pr[(a,1) \rightarrow (a,1)] = \frac{1}{2}$
and $\Pr[(a,2) \rightarrow (a,2)] = \frac{a}{n}$.
Therefore we have
$$
  \E{L_{n}^{(a)}} = 2 a + \sum_{k=a}^{n-1} \frac{1}{1- \frac{k}{n}} = 2 a + n \HarmonicN{n-a} .
$$
The expected number of steps in Phase 2 before reaching the final absorbing state $(n,2)$
is given by formula $\E{L_n} = \sum_{a=1}^{n} p_a \E{L_{n}^{(a)}}$.
Let us recall that $(p_a)$ is a probability distribution.
Taking into account the initial numbers of steps which are necessary to
make the jump $(1,0)\rightarrow(1,1)$ we get 
\begin{equation}
\label{eq:exp2bees} 
  \E{T_n} = n + 2 \sum_{a=1}^{n} a p_a + n \sum_{a=1}^{n-1} p_a \HarmonicN{n-a}.
\end{equation}

\subsection{Expected value \label{sec:2pszczoly:exp}}
\label{sec:basic:function}

Let us consider for a while the classical urns and balls model and let $B_n$ 
denote the moment of the first
collision where $n$ denotes the number of urns. Then $\Pr[B_n = 0] = 0$ and $\Pr[B_n = a] = p_{a-1}$ for $a>0$.
Therefore we see that 
from state $(1,1)$ up to state $(a,2)$
our process is closely 
related to 
the Birthday Paradox, namely
$\sum_{a=1}^{n} a p_a + 1 = \E{B_n}$.
Then, from state $(a,2)$ to state $(n,2)$ (the end of rumor spreading) the process acts like partial Coupon Collector's Problem (i.~e. collecting $n-a$ from $n$ coupons).
From this considerations and a precise formula for $\E{B_n}$  (see e.g. \cite{ANACOMB}) we get
\begin{equation}
  \label{eq:sum:1}
  \sum_{a=1}^{n} a p_a = \sqrt{\frac{\pi n}{2}} - \frac13 + \frac{1}{12}\sqrt{\frac{\pi}{2n}} - \frac{4}{135n} + O\left(n^{-\frac32}\right).
\end{equation}

Now we shall find an asymptotic expression for the sum   $\sum_{a=1}^{n-1} p_a \HarmonicN{n-a}$
with a precision of order $O(n^{-\alpha})$ for some $\alpha >1$.
Notice that the sequence $(p_a)$ is unimodal 
and reaches its maximum near $\sqrt{n}$.
Using the simple estimate $1-x < e^{-x}$ we deduce that  $p_{a+1}  <   e^{-\frac{a^2}{2 n}}$
for $1 < a \leq n$.
Therefore, if $a>n^{\frac23}$, then $p_a<e^{-\sqrt[3]{n}}$. We say that $f=O\left(\frac{1}{n^*}\right)$, if
for any $k>0$ we have $f=O(n^{-k})$. Hence, if $f(a,n)$ is a function such that $|f(a,n)|\leq C n^k$
for some fixed $C$ and $k>0$, then
$$
  \sum_{a=n^\frac23+1}^{n}p_a f(a,n) = O\left(\frac{1}{n^*}\right).
$$
We shall denote by $\sum_{a}^*$ the sum $\sum_{a=1}^{\left\lceil n^\frac23\right\rceil}$.
Let us observe that
$$
  \sum_{a=1}^{n-1} p_a \HarmonicN{n-a} = \sum_a^* p_a \HarmonicN{n-a} + R_1(n),
$$
where $R_1 =  O(\frac{1}{n^*})$.
From the formula $\HarmonicN{n} = \ln n + \gamma + \frac{1}{2 n} + O(n^{-2})$ 
we get
\begin{equation}
\label{pszczoly:pahna}
\begin{split}
 \sum_a^* p_a \HarmonicN{n-a} = \gamma &+ \sum_a^* p_a \ln(n-a)\\
 &+ \sum_a^* p_a \frac{1}{2(n-a)} + R_2(n),
\end{split}
\end{equation}
where $R_2 = O(n^{-2})$.\bigskip

Using carefully the expansion $\frac{1}{n-a} = \frac{1}{n} \sum_j (\frac{a}{n})^j$ and the formula
$\ln(n-a) = \ln n - \sum_{j \geq 1} \frac1j(\frac{a}{n})^j$, after some simple but tedious calculus we get
\begin{equation}
\label{eq:sum:2}
\begin{split}
\sum_{a=1}^{n-1} p_a \HarmonicN{n-a} &= \gamma + \ln n - \sqrt{\frac{n \pi}{2}} - \frac{1}{6 n}\\
&\quad - \frac{1}{12 n^{\frac32}} \sqrt{\frac{\pi}{2}}
 + \frac{4}{135 n^2} + O({n^{-\frac73}}).
\end{split}
\end{equation}

\begin{theorem}
\label{thm:expT2n}
Let $T_n$ denote the number of necessary steps to propagate message in a network with $2$ mobile objects and $n$ static objects. Then
\begin{equation}
\label{eq:expT2n}
\begin{split}
 \E{T_n} &= n \HarmonicN{n} + n + \sqrt{\frac{\pi  n}{2}} - \frac{4}{3}\\
&\quad + \frac{1}{12} \sqrt{\frac{\pi}{2 n}} - \frac{4}{135 n} + O\left(n^{-\frac43}\right).
\end{split}
\end{equation}
\end{theorem}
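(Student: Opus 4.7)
The plan is to substitute the two asymptotic expansions already established, equations~\eqref{eq:sum:1} and~\eqref{eq:sum:2}, into the decomposition \eqref{eq:exp2bees} of $\E{T_n}$, and then use the standard expansion of the harmonic number to reassemble the $n\HarmonicN{n}$ term in the final answer. Essentially, all of the genuinely hard work has already been done: \eqref{eq:sum:1} comes from the Birthday Paradox expression for $\E{B_n}$, and \eqref{eq:sum:2} was obtained by truncating at $a=\lceil n^{2/3}\rceil$, expanding $\ln(n-a)=\ln n-\sum_{j\geq1}\tfrac{1}{j}(a/n)^j$ and $\tfrac{1}{n-a}=\tfrac{1}{n}\sum_j(a/n)^j$, and taking enough moments $\sum_a^* p_a a^j$ against these series.

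First, I would multiply \eqref{eq:sum:1} by $2$ to produce the $2\sum a p_a$ contribution, giving a leading term $2\sqrt{\pi n/2}$ together with corrections down to $O(n^{-3/2})$. Next, I would multiply \eqref{eq:sum:2} by $n$, so that $n(\gamma+\ln n)$ appears; using the classical expansion $\HarmonicN{n}=\ln n+\gamma+\tfrac{1}{2n}+O(n^{-2})$, I would rewrite $n\gamma+n\ln n = n\HarmonicN{n}-\tfrac12+O(n^{-1})$. This is the step that makes the $n\HarmonicN{n}$ term appear in the theorem's formula, with the $-\tfrac12$ absorbed into the constant and the $O(n^{-1})$ swept into the final error.

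Then I would add the three pieces $n+2\sum a p_a+n\sum p_a\HarmonicN{n-a}$ and collect like terms by order. The $\sqrt n$ contributions combine as $2\sqrt{\pi n/2}-\sqrt{\pi n/2}=\sqrt{\pi n/2}$; the constants aggregate as $-\tfrac{2}{3}-\tfrac{1}{2}-\tfrac{1}{6}=-\tfrac{4}{3}$; the $n^{-1/2}$-order pieces combine as $\tfrac{1}{6}\sqrt{\pi/(2n)}-\tfrac{1}{12}\sqrt{\pi/(2n)}=\tfrac{1}{12}\sqrt{\pi/(2n)}$; and the $n^{-1}$-order pieces give $-\tfrac{8}{135n}+\tfrac{4}{135n}=-\tfrac{4}{135n}$, matching \eqref{eq:expT2n} exactly.

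I do not anticipate a real obstacle in this assembly; it is purely bookkeeping once the two expansions are in hand. The only subtlety is the size of the final error: multiplying \eqref{eq:sum:2} by $n$ promotes its $O(n^{-7/3})$ residual to $O(n^{-4/3})$, which dominates the $O(n^{-3/2})$ residual coming from $2\sum a p_a$ and the $O(n^{-1})$ residual coming from reusing $n\HarmonicN{n}$; this is why the overall remainder is $O(n^{-4/3})$ and cannot be sharpened further without sharpening \eqref{eq:sum:2} itself.
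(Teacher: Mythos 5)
Your approach is exactly the paper's: the published proof is a single sentence asserting that \eqref{eq:expT2n} follows by substituting \eqref{eq:sum:1} and \eqref{eq:sum:2} into \eqref{eq:exp2bees}, and your term-by-term assembly carries this out, correctly identifying that the dominant residual $O(n^{-4/3})$ comes from multiplying the $O(n^{-7/3})$ remainder of \eqref{eq:sum:2} by $n$. (You also, silently but correctly, read the term $-\sqrt{n\pi/2}$ in \eqref{eq:sum:2} as $-\sqrt{\pi/(2n)}$; as printed it cannot be right, since $\sum_a p_a \HarmonicN{n-a}\leq \HarmonicN{n-1}$ is only logarithmic, and only the corrected reading makes the $\sqrt{n}$ contributions combine to $\sqrt{\pi n/2}$.)

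There is, however, one genuine inconsistency in your bookkeeping at order $n^{-1}$. You replace $n\ln n+n\gamma$ by $n\HarmonicN{n}-\tfrac12+O(n^{-1})$ and propose to sweep the $O(n^{-1})$ into the final error, while simultaneously claiming a final error of $O(n^{-4/3})$ and retaining the coefficient $-\tfrac{4}{135n}$; since $n^{-1}$ dominates $n^{-4/3}$, an unaccounted $O(n^{-1})$ quantity cannot be discarded at this precision. Carrying the harmonic expansion one term further, $\HarmonicN{n}=\ln n+\gamma+\tfrac{1}{2n}-\tfrac{1}{12n^{2}}+O(n^{-4})$, gives $n\ln n+n\gamma=n\HarmonicN{n}-\tfrac12+\tfrac{1}{12n}+O(n^{-3})$, and the extra $\tfrac{1}{12n}$ shifts the assembled $n^{-1}$ coefficient from $-\tfrac{8}{135}+\tfrac{4}{135}=-\tfrac{4}{135}$ to $-\tfrac{4}{135}+\tfrac{1}{12}=\tfrac{29}{540}$. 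In other words, \eqref{eq:sum:1}, \eqref{eq:sum:2} and \eqref{eq:expT2n} as printed are mutually inconsistent at order $n^{-1}$ once $\HarmonicN{n}$ is expanded to sufficient accuracy: for the theorem to follow, the last explicit coefficient of \eqref{eq:sum:2} would have to be $-\tfrac{29}{540}n^{-2}$ rather than $+\tfrac{4}{135}n^{-2}$ (or else the $1/n$ term of \eqref{eq:expT2n} needs correcting). Your write-up should either track the $\tfrac{1}{12n}$ term explicitly and resolve this discrepancy, or honestly state the result only up to $O(n^{-1})$.
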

\begin{proof}
  The formula for $\E{T_n}$ follows from Equations \ref{eq:exp2bees}, \ref{eq:sum:1} and \ref{eq:sum:2}.
\end{proof}

\subsection{Birthday paradox and coupon collector's problem}
\label{sec:2bees_relation}
As it was stated in the previous section, the model with two mobile objects is closely related to the birthday paradox
and coupon collector's problem. While this relation is quite understandable for a given path of evolution~\eqref{eq:2bees:path},
it's blurred by the fact that we have to take into account the probabilities $p_a$ of the process evolving through that specified path, as expressed by \eqref{eq:exp2bees}.
However, it can be quantitatively concluded from equation~\eqref{eq:expT2n} that the relation is still present in the process as whole --- namely, that
$$
\E{T_{2,n}}-(\E{F_n}+\E{B_n}+\E{C_n})=-2+O(n^{-\frac43}),
$$
where $F_n,B_n$ and $C_n$ are random variables determining number of steps in the initialization phase,
the birthday paradox and the coupon collector's problem respectively.

To explain the above result, let us refer again to the urns and balls model. In the coupon collector's problem, we start with $n$ empty urns into which
we randomly throw balls until all of the urns contain at least one. While in the classical approach to analysis of the problem (like in~\cite{MMITZENMACHER}), the process
is divided into $n$ phases determined by the number of the non-empty urns, here we want to distinguish only two phases. In the first phase we throw the balls
up to the first collision (or until all of the urns contain exactly one ball) after which the second phase starts and lasts to the end of the process (in particular, it
can happen that the second phase won't occur at all). Let $k$ be the number of steps in the first phase.
If $k=n$, then all of the $n$ urns contains exactly one ball and the process stops without the second phase. For $k<n$ there are exactly $n-k+1$ empty urns which
have to be filled in the second phase. From above considerations we get
\begin{align*}
\E{C_n}&=\sum_{k=2}^n p_{k-1}(k+n\HarmonicN{n-k+1})+p_n n\\
&=\sum_{k=1}^n p_k(k+n\HarmonicN{n-k})+1-p_n.
\end{align*}
By substituting this result to formula~\eqref{eq:exp2bees} we have
\begin{equation}
\label{eq:T2nrel}
 \E{T_{2,n}}=\E{F_n}+\E{B_n}-1+\E{C_{n-1}}+\frac{n!}{n^n},
\end{equation}
getting the exact relation between the considered problems. It should be noticed, that, while the above approach simplifies calculations from the previous subsections,
they are still needed to obtain the variance.

\section{Three mobile objects}
\label{sec:3bees}
The analysis of the case with exactly three mobile objects can be carried using approach analogous to that from section~\ref{sec:2bees}, however it gets much more
complicated. We can divide the process into 3 phases determined by the number of static objects knowing the rumor. We define the probability
$$
 p_{ab} = \frac{2a}{n+a} \frac{b}{2n-b} \prod_{k=1}^{a-1}\frac{n-k}{n+k} \prod_{k=a}^{b-1}\frac{2(n-k)}{2n-k}
$$
of event that the process evolves with transitions $(a,1)\rightarrow(a,2)$ and $(b,2)\rightarrow(b,3)$ (i.e. that the first and second phase ends respectively with exactly
$a$ and $b$ static objects knowing the information, see Figure~\ref{fig:3bees_path}).
\begin{figure}
 \begin{center}

\begin{tikzpicture}[scale=0.65]
 \tikzstyle{myarrow}=[->,>=stealth,thick];
 \draw[gray,densely dotted] (0,0) grid (12,3);
 \draw[myarrow] (0,0) -- ++(0,1);
 \draw[myarrow] (0,1) -- ++(1,0);
 \draw[thick,dashed] (1,1) -- ++(2,0);
 \draw[myarrow] (3,1) -- ++(1,0);
 \draw[myarrow] (4,1) -- ++(0,1);
 \draw[myarrow] (4,2) -- ++(1,0);
 \draw[thick,dashed] (5,2) -- ++(2,0);
 \draw[myarrow] (7,2) -- ++(1,0);
 \draw[myarrow] (8,2) -- ++(0,1);
 \draw[myarrow] (8,3) -- ++(1,0);
 \draw[thick,dashed] (9,3) -- ++(2,0);
 \draw[myarrow] (11,3) -- ++(1,0);
 \draw (0,-1) node [anchor=south] {$1$};
 \draw (1,-1) node [anchor=south] {$2$};
 \draw (2.5,-1) node [anchor=south] {$\cdots$};
 \draw (4,-1) node [anchor=south] {$a$};
 \draw (6,-1) node [anchor=south] {$\cdots$};
 \draw (8,-1) node [anchor=south] {$b$};
 \draw (10,-1) node [anchor=south] {$\cdots$};
 \draw (12,-1) node [anchor=south] {$n$};
 \draw (0,0) node [left] {$0$};
 \draw (0,1) node [left] {$1$};
 \draw (0,2) node [left] {$2$};
 \draw (0,3) node [left] {$3$};
\end{tikzpicture}

 \end{center}
 \caption{Possible path of evolution of the process with 3 mobile objects.}
 \label{fig:3bees_path}
\end{figure}
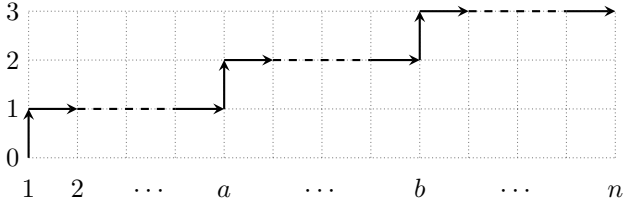

Arguments equivalent to those from previous section lead to expression
\begin{align*}
 \E{T_{3,n}} &= n+\sum_{a=1}^{n}\sum_{b=a}^{n} p_{ab} \Bigg(\sum_{k=1}^{a} \frac{3n}{n+k}\\
&\quad+ \sum_{k=a}^{b} \frac{3n}{2n-k}
+n\HarmonicN{n-b}\Bigg),
\end{align*}
which seems to be much harder to analyze. However, manipulations similar to that from section~\ref{sec:2bees_relation} result in formula analogous to \eqref{eq:T2nrel}
\begin{align*}
 \E{T_{3,n}} &= \E{F_n}+\E{C_{n-1}}+\frac32\E{P^{(1)}_n}\\
&\quad+\frac12 \E{P^{(2)}_n}+\frac{2^{n+3}-1}{3\binom{2n}{n}}~,
\end{align*}
with $F_n$ and $C_n$ being declared as before and $P^{(1)}_n$ and $P^{(2)}_n$ denoting number of static objects knowing the rumor at the end of the first and second phase respectively.
By calculation of $\E{P^{(1)}_n}$ and $\E{P^{(2)}_n}$ we get the following
\begin{theorem}
Let $T_n$ denote the number of necessary steps to propagate message in a network with $3$ mobile and $n$ static objects. Then
\[
\E{T_{n}} = n \HarmonicN{n}+n+\frac{4}{3}\sqrt{\pi n}-\frac{5}{3}+\frac{1}{6}\sqrt{\frac{\pi}{n}}+\frac{1}{96}\sqrt{\frac{\pi}{n^3}}+O(n^{-\frac{5}{2}}).
\]
\end{theorem}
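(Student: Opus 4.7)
The statement is reduced to an asymptotic evaluation by the decomposition displayed just before it:
\[
\E{T_{3,n}} = \E{F_n}+\E{C_{n-1}}+\tfrac{3}{2}\E{P^{(1)}_n}+\tfrac{1}{2} \E{P^{(2)}_n}+\tfrac{2^{n+3}-1}{3\binom{2n}{n}}.
\]
Three of the five summands are handled immediately: $\E{F_n}=n$ by the geometric initialization of Section~\ref{sec:model}; the coupon-collector term $\E{C_{n-1}}$ contributes the $n\HarmonicN{n}$ leading part via $\HarmonicN{n}=\ln n+\gamma+\frac{1}{2n}+O(n^{-2})$; and Stirling's $\binom{2n}{n}\sim 4^n/\sqrt{\pi n}$ makes the last summand exponentially small, absorbed into the $O(n^{-5/2})$ remainder. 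Everything therefore reduces to computing the two phase-end means $\E{P^{(1)}_n}$ and $\E{P^{(2)}_n}$.

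For $\E{P^{(1)}_n}=\sum_a a\,q_a$ with $q_a=\sum_b p_{ab}=\frac{2a}{n+a}\prod_{k=1}^{a-1}\frac{n-k}{n+k}$, I would follow the template of Section~\ref{sec:2pszczoly:exp} almost line-for-line. Using $\ln\frac{1-k/n}{1+k/n}=-2k/n-\tfrac{2}{3}(k/n)^3-\cdots$ one obtains $q_a=\frac{2a}{n}e^{-a^2/n}(1+\text{polynomial corrections in }a/n)$; the tail beyond $a=\lceil n^{2/3}\rceil$ is super-polynomially small in the sense introduced in Section~\ref{sec:2pszczoly:exp}, and the bulk is evaluated by Euler--Maclaurin combined with a power-series expansion of the prefactor, exactly as in the birthday-paradox computation.

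For $\E{P^{(2)}_n}=\sum_b b\,r_b$ with $r_b=\sum_{a\le b}p_{ab}$, I would factor $p_{ab}=q_a\cdot\frac{b}{2n-b}\prod_{k=a}^{b-1}\frac{2(n-k)}{2n-k}$ and sum over $a$ first. The phase-2 product satisfies $\ln\prod_{k=a}^{b-1}\frac{2(n-k)}{2n-k}=-(b^2-a^2)/(4n)+\cdots$, so after multiplication by $q_a\sim\frac{2a}{n}e^{-a^2/n}$ the inner sum in $a$ collapses to a Gaussian integral and gives $r_b\sim\frac{2b}{3n}\bigl(e^{-b^2/(4n)}-e^{-b^2/n}\bigr)$ to leading order. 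A sanity check with $\int_0^\infty b^2e^{-b^2/\alpha}\,db=\frac{\sqrt{\pi}}{4}\alpha^{3/2}$ confirms the leading coefficient: $\frac{3}{2}\E{P^{(1)}_n}+\frac{1}{2}\E{P^{(2)}_n}=\frac{3}{4}\sqrt{\pi n}+\frac{7}{12}\sqrt{\pi n}+\cdots=\frac{4}{3}\sqrt{\pi n}+\cdots$, matching the statement.

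The hard part is reaching precision $O(n^{-5/2})$: every expansion above must be pushed two more orders. For $q_a$ this means keeping the $k^3/n^3$ term inside $\ln\frac{1-k/n}{1+k/n}$ and the sub-leading terms of $\frac{2a}{n+a}=\frac{2a}{n}(1-a/n+a^2/n^2-\cdots)$; for $r_b$ it additionally requires careful treatment of the asymmetric upper limit $a\le b$ in the inner sum, which contributes at the $1/\sqrt{n}$ and $1/\sqrt{n^3}$ scales, plus the analogous corrections from $\frac{b}{2n-b}$ and higher-order terms of the phase-2 product. The Euler--Maclaurin endpoint terms for both the inner and the outer sums have to be propagated through. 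Once all sub-leading contributions are collected and added to $\E{F_n}$, $\E{C_{n-1}}$, and the Stirling-bounded remainder, the stated formula drops out.
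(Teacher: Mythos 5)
Your proposal follows exactly the route the paper takes: it starts from the decomposition $\E{T_{3,n}}=\E{F_n}+\E{C_{n-1}}+\frac32\E{P^{(1)}_n}+\frac12\E{P^{(2)}_n}+\frac{2^{n+3}-1}{3\binom{2n}{n}}$ stated immediately before the theorem and reduces everything to the asymptotics of the two phase-end means, evaluated with the same birthday-paradox-style Gaussian/Euler--Maclaurin machinery as in the two-mobile section, and your leading-order check $\frac32\cdot\frac12\sqrt{\pi n}+\frac12\cdot\frac76\sqrt{\pi n}=\frac43\sqrt{\pi n}$ is correct. Since the paper's own proof consists only of the phrase ``by calculation of $\E{P^{(1)}_n}$ and $\E{P^{(2)}_n}$,'' your sketch is, if anything, more explicit than the published argument, though like the paper it leaves the lower-order terms uncomputed.
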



\section{Equal Number of Static and Mobile Nodes} 
\label{sec:equal}
In this section we assume that there are  $n$ static and $n$ mobile
objects, i.e. we deal with the Markov chain with the state space $\Omega_{n,n}$
and the transitions probabilities described in the section \ref{sec:model}.
The presented solution is surprisingly simple.
For $(i,j) \in \Omega_{n,n}$ we put
$p_{i,j} = \Pr[(i,j) \rightarrow (i,j)]$.

\begin{lemma}
\label{lemma:pr_n_n}
  If $1\leq a \leq 2 n$ then\\
  \indent 1. $\max\{p_{i,j}:i+j=a\} = 
     p_{\left\lfloor\frac{a}{2}\right\rfloor,a-\left\lfloor\frac{a}{2}\right\rfloor} =   
     p_{a-\left\lfloor\frac{a}{2}\right\rfloor,\left\lfloor\frac{a}{2}\right\rfloor}$,\\
  \indent 2. $\min\{ p_{i,j}: i+j=a\} = p_{1,a-1} = p_{a-1,1}$. 
\end{lemma}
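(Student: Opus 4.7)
The strategy is to make the dependence of $p_{i,j}$ on the pair $(i,j)$ completely explicit and then reduce the extremization to a one-variable exercise. First I would expand the definition: with $m=n$, the transition formulas from Section~\ref{sec:model} give
\begin{equation*}
p_{i,j} = 1 - \frac{j(n-i)}{n^{2}} - \frac{i(n-j)}{n^{2}} = 1 - \frac{n(i+j)-2ij}{n^{2}}.
\end{equation*}
On the line $i+j=a$ this simplifies to
\begin{equation*}
p_{i,j} = 1 - \frac{a}{n} + \frac{2\,ij}{n^{2}},
\end{equation*}
so $p_{i,j}$ is an affine, strictly increasing function of the single quantity $ij$. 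Consequently, finding the extrema of $p_{i,j}$ on the anti-diagonal $i+j=a$ is the same as finding the extrema of the product $ij$ under that constraint.

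Next I would substitute $j=a-i$ and study the function $g(i) = i(a-i)$. This is a concave quadratic with apex at $i=a/2$, so its maximum over integer $i$ is attained at $i=\lfloor a/2\rfloor$ (equivalently at $i=\lceil a/2\rceil=a-\lfloor a/2\rfloor$), which yields claim~1 by the symmetry $g(i)=g(a-i)$. The minimum of $g$ on an interval is attained at an endpoint; over admissible integers $i$ with $1\le i, 1\le a-i$ the extremes are $i=1$ and $i=a-1$, both giving $g(i)=a-1$, which yields claim~2.

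The argument is essentially forced, and I do not expect a genuine obstacle; the only step worth double-checking is the algebraic collapse that turns $p_{i,j}$ into a function of $ij$ alone (after which everything is a textbook AM--GM/concavity observation). A minor bookkeeping point is that for $a>n+1$ the state $(1,a-1)$ already falls outside $\Omega_{n,n}$, so strictly the minimum over the state space is then attained at $(a-n,n)$ and $(n,a-n)$; concavity of $g$ delivers this endpoint-of-the-admissible-interval behaviour uniformly in $a$, so the same one-variable argument covers the whole range $1\le a\le 2n$ once the admissible range of $i$ is clipped correctly.
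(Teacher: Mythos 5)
Your proof is correct; the paper explicitly omits a proof of this lemma (``We skip a simple proof of this lemma''), and your reduction of $p_{i,j}$ on the anti-diagonal $i+j=a$ to the concave quadratic $g(i)=i(a-i)$ via the identity $p_{i,j}=1-\frac{a}{n}+\frac{2ij}{n^{2}}$ is exactly the kind of elementary one-variable argument the authors evidently had in mind. Your closing remark --- that for $a>n+1$ the state $(1,a-1)$ lies outside $\Omega_{n,n}$ and the minimum over admissible states is then attained at $(a-n,n)$ and $(n,a-n)$ --- identifies a genuine (if minor) imprecision in the lemma as stated, which your clipping of the admissible range of $i$ handles correctly.
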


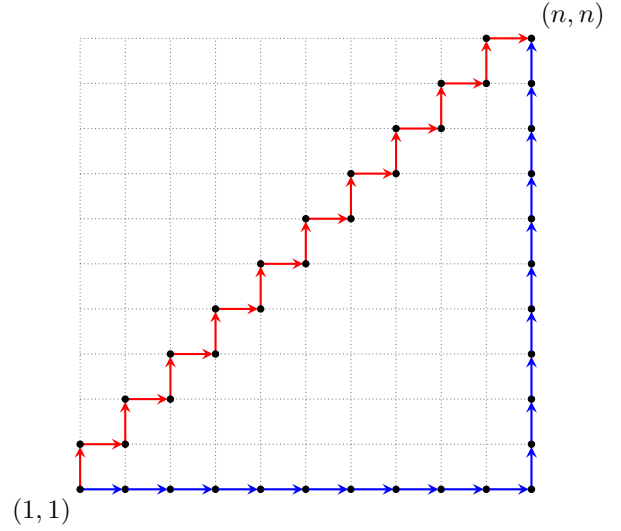
\begin{figure}[!t]
  \begin{center}
    \begin{tikzpicture}[scale=0.6]
 \draw[gray,densely dotted](0, 0) grid (10,10);
 \foreach \x in {0,...,9}
 {
  \begin{scope}[thick,shorten >=1pt,shorten <=1pt,>=stealth,->]
    \draw [color=blue] (\x,0) -- +(1,0);
    \draw [color=blue] (10,\x) -- +(0,1);
    \draw [color=red ] (\x,\x) -- +(0,1);
    \draw [color=red ] (\x,\x) +(0,1) -- +(1,1);
  \end{scope}
  \filldraw (\x,\x) circle (2pt);
  \filldraw (\x,\x) +(0,1) circle (2pt);
  \filldraw (\x,0) +(1,0) circle (2pt);
  \filldraw (10,\x) +(0,1) circle (2pt);
 }
 \draw (0,0) node [below left] {$(1,1)$};
 \draw (10,10) node [above right] {$(n,n)$};
\end{tikzpicture}
  \end{center}
  \caption{Red arrows shows the slowest path and the blue path shows the fastest path.}
  \label{fig:nxn_model}
\end{figure}

\noindent
We skip a simple proof of  this lemma. 
From this lemma 
we deduce that the probability of changing  state on the line
given by $i+j=a$ is smallest when $|i-j|$ is minimal. 
Therefore every realization of the walk from the state $(1,1)$ to $(n,n)$ 
is stochastically dominated by the length of  the walk restricted to states $\left(\frac{a}{2}, \frac{a}{2}\right)$ if $a$ is even and states $\left(\frac{a+1}{2}, \frac{a-1}{2}\right)$ or 
$\left(\frac{a+1}{2}, \frac{a-1}{2}\right)$ when $a$ is odd.

\begin{theorem}
Let $T_{n,n}$ denote the time necessary to reach the state $(n,n)$ starting from the state $(1,0)$. 
Then 
$$    2 n \HarmonicN{n} + 2\ln n + O(1) \leq \E{T_{n,n}} < 2 n \HarmonicN{n} +  \ln(4) n + O(1).$$
\end{theorem}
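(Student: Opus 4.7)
\emph{Setup.} The plan is to decompose $T_{n,n} = F_n + L_{n,n}$ and, using $\E{F_n} = n$, reduce the problem to estimating $\E{L_{n,n}}$. The key geometric observation is that in Phase~2 every genuine transition increases $i+j$ by exactly one, so the chain visits exactly one (random) state $X_a$ on each diagonal $D_a := \{(i,j) : i+j = a\}$ for $a = 2,\dots,2n-1$ before being absorbed at $(n,n)$. Writing $h(i,j) = 1/(1 - p_{i,j}) = n^2/\bigl(i(n-j) + j(n-i)\bigr)$ for the expected holding time, this yields $\E{L_{n,n}} = \sum_{a=2}^{2n-1} \E{h(X_a)}$. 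Lemma~\ref{lemma:pr_n_n} then gives the pointwise bound $h_{\min}(a) \le h(X_a) \le h_{\max}(a)$, where the extrema are attained at a corner and at the near-midpoint of $D_a$, respectively; consequently $\sum_a h_{\min}(a) \le \E{L_{n,n}} \le \sum_a h_{\max}(a)$.

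\emph{Upper bound.} Using $ij \le a^2/4$ on $D_a$ (AM--GM) gives $h_{\max}(a) \le \tfrac{2n^2}{a(2n-a)} = n\bigl(\tfrac{1}{a} + \tfrac{1}{2n-a}\bigr)$. Summation collapses to $n\bigl(\HarmonicN{2n-1} + \HarmonicN{2n-2} - 1\bigr)$, and the identity $\HarmonicN{2n} = \HarmonicN{n} + \ln 2 + O(1/n)$ reduces this to $2n\HarmonicN{n} + n(\ln 4 - 1) + O(1)$. Adding the $n$ steps of Phase~1 yields the announced upper bound $2n\HarmonicN{n} + n\ln 4 + O(1)$.

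\emph{Lower bound.} A direct check using Lemma~\ref{lemma:pr_n_n} gives $\min_{D_a} ij = a-1$ for $a \le n+1$ and $n(a-n)$ for $a \ge n+1$, hence
\[
h_{\min}(a) = \begin{cases} \dfrac{n^2}{a(n-2)+2}, & 2 \le a \le n+1,\\[1ex] \dfrac{n}{2n-a}, & n+1 \le a \le 2n-1. \end{cases}
\]
The upper range sums to $n\HarmonicN{n-2} = n\HarmonicN{n} + O(1)$. The main obstacle is the lower range: the crude bound $h_{\min}(a) \ge n/a$ only recovers $n(\HarmonicN{n+1}-1) = n\HarmonicN{n} - n + O(1)$, losing the $2\ln n$ term required by the theorem. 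I would handle this via the algebraic identity
\[
\frac{n^2}{a(n-2)+2} \;=\; \frac{n}{a} + \frac{2n(a-1)}{a\bigl(a(n-2)+2\bigr)},
\]
whose correction expands term-wise as $\tfrac{2(a-1)}{a^2}\bigl(1 + O(1/n)\bigr)$ and contributes exactly $2\ln n + O(1)$ when summed over $a = 2,\dots,n+1$. Assembling the three pieces gives $\sum_a h_{\min}(a) = 2n\HarmonicN{n} + 2\ln n - n + O(1)$, and adding the $n$ of Phase~1 yields the stated lower bound.
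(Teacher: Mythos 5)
Your proof is correct and takes essentially the same approach as the paper: both arguments exploit that the chain visits exactly one state per diagonal $i+j=a$ and bound the expected holding time there by its extrema from Lemma~\ref{lemma:pr_n_n}, which is precisely the paper's ``fastest path'' (corner states) versus ``slowest path'' (near-diagonal states) comparison. Your per-diagonal accounting, the AM--GM step for the upper bound, and the algebraic identity extracting the $2\ln n$ term are just a slightly different organization of the same harmonic-sum computations the paper performs along those two explicit paths.
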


\begin{proof}
 Let $S_{i,j}$ denote the number of rounds in which the state $(i,j)$ does not change.
By 
Lemma~\ref{lemma:pr_n_n} (see Figure \ref{fig:nxn_model}) we get 
 $$
   \E{T_{n,n}} \geq n + \sum_{j=1}^{n} \E{S_{1,j}} + \sum_{j=2}^{n-1} \E{S_{j,n}}.
 $$
Random variables $S_{i,j}$ are geometrically distributed, hence
 \begin{gather*}
 \sum_{j=2}^{n-1} \E{S_{j,n}} = \sum_{j=2}^{n-1} \frac{1}{1-p_{j,n}}
 = \sum_{j=2}^{n-1} \frac{1}{1 - \frac{j}{n}}
 = n \HarmonicN{n} + O(1) 
\end{gather*}
and
\begin{gather*}
  \sum_{j=1}^{n} \E{S_{1,j}} = \sum_{j=1}^{n} \frac{1}{1-p_{1,j}}
  = \sum_{j=1}^{n} \frac{1}{\frac1n \left(1-\frac{j}{n}\right) + \frac{j}{n}\left(1-\frac1n\right)}\\
 = \frac{n^2}{n-2} \sum_{j=1}^{n} \frac{1}{j + \frac{n}{n-2}} = 
  n \HarmonicN{n} - n + 2 \ln n + O(1).
 \end{gather*}
Therefore
$$
  \E{T_{n,n}} \geq 2 n \HarmonicN{n} + 2\ln n + O(1).
$$

In order to find an upper bound for $\E{T_{n,n}}$ we shall consider
the slowest  
path of an evolution of the considered Markov chain (see Figure \ref{fig:nxn_model}).
Notice that if $a$ is even, then among all states $(i,j)$ such that
$i+j=a$ we should consider the state $(\frac{a}{2},\frac{a}{2})$. 
If $a$ is odd then 
we have to consider states $(\frac{a+1}{2},\frac{a-1}{2})$ and $(\frac{a-1}{2},\frac{a+1}{2})$,
but the chain is symmetric, so these two cases are equivalent, so we may choose any of them.
Hence 
$$
\E{T_{n,n}} \leq n +  
  \sum_{j = 1}^{n-1} \left(\E{S_{j,j}} + \E{S_{j,j+1}}\right).
$$
But
 \begin{align*}
  \sum_{j=1}^{n-1} \E{S_{j,j}} &= 
  \sum_{j=1}^{n-1} \frac{1}{1 - p_{j,j}}\\
&= 
  \sum_{j=1}^{n-1} \frac{1}{2 \frac{j}{n}    \left(1 - \frac{j}{n}\right)} =  n \HarmonicN{n} - 1.
 \end{align*}
and
\begin{align*}
 \sum_{j=1}^{n-1} \E{S_{j,j+1}} &= 
 \sum_{j=1}^{n-1} \frac{1}{1 - p_{j,j+1}}\\
&= 
 n^2 \sum_{j=1}^{n-1} \frac{1}{-2 j (1+j) + 2jn + n}\\
 &= n (\ln n + \gamma + \ln 4 - 1) + O(1)\\
&= 
 n \HarmonicN{n} + (\ln 4 - 1)n + O(1),
\end{align*}
hence
$\E{T_{n,n}} \leq 2 n \HarmonicN{n} + \ln(4) n + O(1)$.   
\end{proof}

\section{General case}
\label{sec:general}
\begin{figure}
\begin{center}
   \begin{tikzpicture}[scale=1.2]
   \draw[->] (0,0) -- coordinate (x axis mid) (5.2,0);
   \draw[->] (0,0) -- coordinate (y axis mid) (0,5.2);
  \draw (5cm,-3pt) -- (5cm,3pt) node [below=2pt] {\footnotesize{$10^5$}};
  \draw (-3pt,5cm) -- (3pt,5cm) node [right] {\footnotesize{$2.5\cdot 10^6$}};
  \draw (0,0) node [left=3pt,below=0pt] {\footnotesize{0}};
  \node[below=5pt] at (x axis mid) {$n$};
  \node[rotate=90,above=-2pt] at (y axis mid) {\# of steps};
   \draw[xscale=0.05,yscale=0.019] plot file {explin.txt};
   \draw[xscale=0.05,yscale=0.019,red,dashed] plot file {upper.txt};
   \draw[xscale=0.05,yscale=0.019,red,dashed] plot file {lower.txt};
  \end{tikzpicture}
  \caption{Numerically computed expected number of steps in case of equal number of mobile and static objects (black line) and our upper and lower bounds (red dashed lines). Notice, that upper bound line covers plot of the expected value of steps.}
  \label{fig:lin_bounds}
\end{center}
\end{figure}
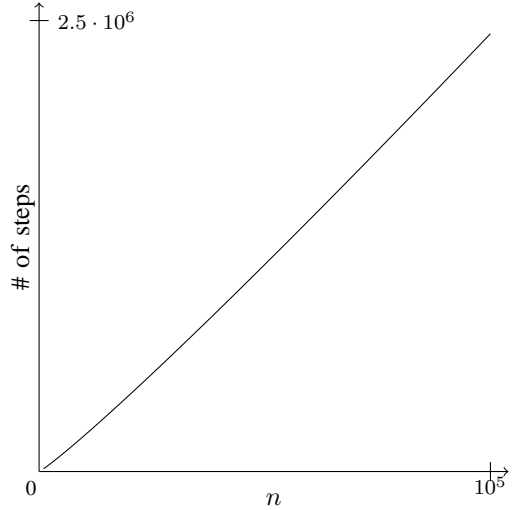

\begin{figure*}[!t]
 \begin{center}
 \subfloat[process evolution with $\alpha=1$]{
  \begin{tikzpicture}[scale=0.8]
  \draw[->] (0,0) -- coordinate (x axis mid) (5.2,0);
  \draw[->] (0,0) -- coordinate (y axis mid) (0,5.2);
  \draw (5cm,-3pt) -- (5cm,3pt) node [below=2pt] {\footnotesize{1}};
  \draw (-3pt,5cm) -- (3pt,5cm) node [left=2pt] {\footnotesize{1}};
  \draw (0,0) node [left=3pt,below=0pt] {\footnotesize{0}};
  \node[below=5pt] at (x axis mid) {$x$};
  \node[left=5pt] at (y axis mid) {$y$};
  \fill[xscale=0.02,yscale=0.02,draw=gray,fill=gray!15] plot file {symbg250x250.txt};
  \draw[red] (0,0) -- (5,5);
  \draw[xscale=0.02,yscale=0.02] plot file {sym250x250.txt};
  \end{tikzpicture}
  \label{fig:experiment_linear}
 }
 \qquad
 \subfloat[process evolution with $\alpha=\frac{1}{10}$]{
  \begin{tikzpicture}[scale=0.8]
  \draw[->] (0,0) -- coordinate (x axis mid) (5.2,0);
  \draw[->] (0,0) -- coordinate (y axis mid) (0,5.2);
  \draw (5cm,-3pt) -- (5cm,3pt) node [below=2pt] {\footnotesize{1}};
  \draw (-3pt,5cm) -- (3pt,5cm) node [left=2pt] {\footnotesize{1}};
  \draw (0,0) node [left=3pt,below=0pt] {\footnotesize{0}};
  \node[below=5pt] at (x axis mid) {$x$};
  \node[left=5pt] at (y axis mid) {$y$};
  \fill[xscale=0.04,yscale=0.05,draw=gray,fill=gray!15] plot file {symbg1000x125.txt};
  \draw[xscale=5,yscale=5,red] plot file {exp10x1.txt};
  \draw[xscale=0.04,yscale=0.05] plot file {sym1000x125.txt};
  \end{tikzpicture}
  \label{fig:experiment_n10}
  }
 \qquad
 \subfloat[process evolution with $\alpha=4$]{
  \begin{tikzpicture}[scale=0.8]
  \draw[->] (0,0) -- coordinate (x axis mid) (5.2,0);
  \draw[->] (0,0) -- coordinate (y axis mid) (0,5.2);
  \draw (5cm,-3pt) -- (5cm,3pt) node [below=2pt] {\footnotesize{1}};
  \draw (-3pt,5cm) -- (3pt,5cm) node [left=2pt] {\footnotesize{1}};
  \draw (0,0) node [left=3pt,below=0pt] {\footnotesize{0}};
  \node[below=5pt] at (x axis mid) {$x$};
  \node[left=5pt] at (y axis mid) {$y$};
  \fill[xscale=0.01,yscale=0.04,draw=gray,fill=gray!15] plot file {symbg125x500.txt};
  \draw[xscale=5,yscale=5,red] plot file {exp1x4.txt};
  \draw[xscale=0.01,yscale=0.04] plot file {sym125x500.txt};
  \end{tikzpicture}
  \label{fig:experiment_4n}
  }
  \end{center}
  \caption{Example realization of process (black line) with accumulated results of $10^5$ simulations (gray area). The process seems to be concentrated around solution (red line) of differential equation~\eqref{eqn:yx}.}
\label{fig:sim}
\end{figure*}
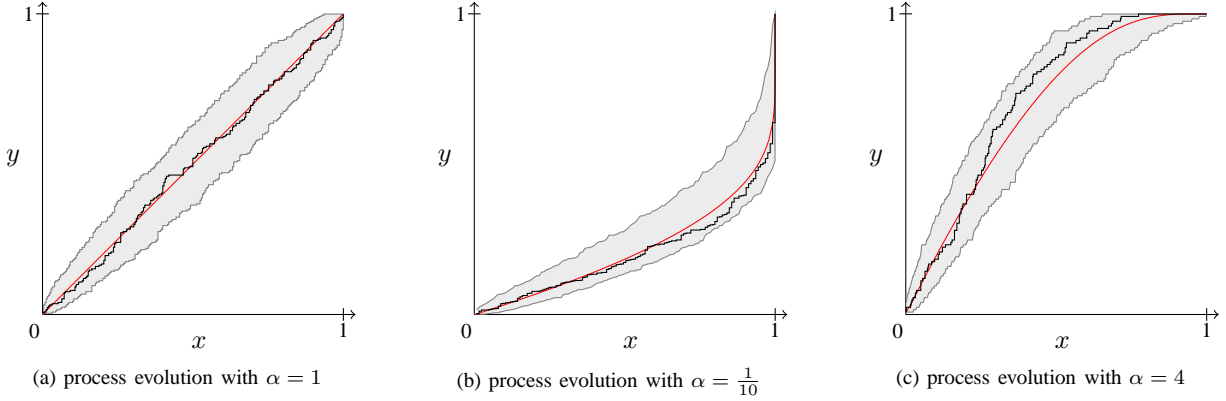

In previous sections we investigated some special yet very important cases of our rumor spreading process. To complete the considerations,
we present here results for the process in general case, mostly obtained by simulations and numerical computations.
We are interested in information spreading between $n$ mobile and $\alpha n$ static objects, where $\alpha$ is proportion coefficient.
To get some intuitions about process behavior, we analyze continuous version of process by using differential equations.
With $x(t)$ and $y(t)$ being proportions of infected objects of static and mobile type respectively, we get
\begin{equation*}
\begin{cases}
 \dot{x}=\frac{1}{\alpha n}(1-x)y\\
 \dot{y}=\frac{1}{n}x(1-y)
\end{cases} 
\end{equation*}
with initial conditions $x(0)=\frac{1}{\alpha n}$ and $y(0)=\frac1n$. Solving this ODE leads to
\begin{equation}
\label{eqn:yx}
y(x)=\lambertW\bigl((\frac1n-1)(\frac{1}{\alpha n}-1)^{-\alpha}e^{\alpha x-1}(x-1)^\alpha\bigr)+1~,
\end{equation}
where $\lambertW(z)$ is the Lambert $\lambertW$-function (the inverse function of $z(w)=w e^w$).
Computer simulations seem to confirm that the evolution paths of process is well approximated by obtained solutions (see Figure~\ref{fig:sim}).

By taking $\alpha=1$ we get the equal-case from section \ref{sec:equal} for which equation~\eqref{eqn:yx} turns into $y(x)=x$ (see Figure~\ref{fig:experiment_linear}).
As this continuous solution corresponds to the slowest path of progression (c.f. Figure~\ref{fig:nxn_model}), it appears that the expected value $\E{T_{n,n}}$ is close to
our upper-bound (Figure~\ref{fig:lin_bounds}).

\section{Distributed model}
\label{sec:distributed}
In previous sections we were considering model with objects communicating in rounds with exactly one connection per round. Such assumption leads to counterintuitive results,
that f.e. disseminating information in network consisted of $n$ clients and one server is expected to end faster than in network of the same size, with $n-1$ clients and two servers. The problem is
that in the described model objects work sequentially and we can't benefit from larger number of servers.

In practice applications, stations participating in communication run in parallel. Let us assume then, that mobile objects are randomly awaken every once in a while.
Precisely speaking, if $N_i(t)$ denotes number of connections established by $i^{th}$ mobile object up to time $t$, then we would like (for each $1\leq i \leq m$ and some parameter $\lambda$)
$N_i(t)$ to be stochastic counting process such that the following statements hold (as described in~\cite{MMITZENMACHER}):
\begin{enumerate}
 \item $N_i(0)=0$,
 \item the process has independent and stationary increments,
 \item the probability of a single event in a short interval $t$ tends to $\lambda t$,
 \item the probability of more than one event in a short interval $t$ tends to $0$.
\end{enumerate}
The above, relatively natural conditions, define $N_i(t)$ to be the Poisson process with parameter $\lambda$. What is interesting here, the sum $N(t)=N_1(t)+\ldots +N_m(t)$ is the Poisson
process with the parameter $m\lambda$, which means that the total number of activations in time unit is proportional to the number of mobile objects. Moreover, the probability of
collision (which has to be considered since stations run in parallel) in a short interval tends to $0$. This lets us to apply results from previous sections to the distributed model.
Namely, suppose 
that all mobile objects are activated independently with exponential distribution with parameter $\lambda$. 
Let $T_{n,m}(\lambda)$ be the time necessary for broadcasting an information by $m$ mobile 
objects in this model when the mobile objects are activated with intensity $\lambda$. 
Then

\begin{align*}
	\E{T_{n,1}(\lambda)} &= \ln (n)/\lambda+ O(n),\\
  \E{T_{n,2}(\lambda)} &= \ln (n)/(2\lambda)+ O(n), \\
  \E{T_{n,3}(\lambda)} &= \ln (n)/(3\lambda)+ O(n), \\
  \E{T_{n,n}(\lambda)} &= 2\ln (n)/\lambda + O(1).
\end{align*}

\section{Final Remarks}

The average
time for broadcasting an information with  one mobile object is equal to $n \HarmonicN{n-1} + n$. 
If we have two mobile objects then this time is equal to $n \HarmonicN{n} + n + \sqrt{\frac{\pi  n}{2}}+ O(1)$.
However, for a fair comparison of 
these two models we should divide the evolution
in the second model in rounds consisting of two consecutive steps --- we obtain then a speed
up (compared to model with one mobile object) with factor 2 plus small additional 
cost of order $\sqrt{\frac{\pi  n}{8}}$. 
For three mobile objects we should divide the evolution speed by $3$ and it also gives speed up compared with two mobile objects.
When we have $n$ stable and $n$ mobile objects
then we should divide the evolution speed by $n$ and still we have $2 \ln n + O(1)$.
It shows that even if we model information spreading as a process on the complete bipartite graph with $2 n$ objects, not on the complete graph with $n$ objects like in \cite{FriezeGrimmet85, Pittel87}, we still obtain asymptotically the same speed of information spreading.

\IEEEtriggeratref{4}
\bibliographystyle{IEEEtran}
\bibliography{IEEEabrv,bibliography}

\begin{thebibliography}{10}
\providecommand{\url}[1]{#1}
\csname url@samestyle\endcsname
\providecommand{\newblock}{\relax}
\providecommand{\bibinfo}[2]{#2}
\providecommand{\BIBentrySTDinterwordspacing}{\spaceskip=0pt\relax}
\providecommand{\BIBentryALTinterwordstretchfactor}{4}
\providecommand{\BIBentryALTinterwordspacing}{\spaceskip=\fontdimen2\font plus
\BIBentryALTinterwordstretchfactor\fontdimen3\font minus
  \fontdimen4\font\relax}
\providecommand{\BIBforeignlanguage}[2]{{%
\expandafter\ifx\csname l@#1\endcsname\relax
\typeout{** WARNING: IEEEtran.bst: No hyphenation pattern has been}%
\typeout{** loaded for the language `#1'. Using the pattern for}%
\typeout{** the default language instead.}%
\else
\language=\csname l@#1\endcsname
\fi
#2}}
\providecommand{\BIBdecl}{\relax}
\BIBdecl

\bibitem{FriezeGrimmet85}
A.~M. Frieze and G.~R. Grimmet, ``The shortest-path problem for graphs with
  random arc-lengths,'' \emph{Discrete Appl. Math.}, pp. 57--77, 1985.

\bibitem{Pittel87}
B.~Pittel, ``On spreading a rumor,'' \emph{SIAM J. Appl. Math.}, vol.~47,
  no.~1, pp. 213--223, 1987.

\bibitem{DBLP:conf/podc/DemersGHIL87}
A.~J. Demers, D.~H. Greene, C.~Hauser, W.~Irish, J.~Larson, S.~Shenker, H.~E.
  Sturgis, D.~C. Swinehart, and D.~B. Terry, ``Epidemic algorithms for
  replicated database maintenance,'' in \emph{PODC}, 1987, pp. 1--12.

\bibitem{DBLP:conf/focs/KarpSSV00}
R.~M. Karp, C.~Schindelhauer, S.~Shenker, and B.~V{\"o}cking, ``Randomized
  rumor spreading,'' in \emph{FOCS}, 2000, pp. 565--574.

\bibitem{DBLP:journals/rsa/FeigePRU90}
U.~Feige, D.~Peleg, P.~Raghavan, and E.~Upfal, ``Randomized broadcast in
  networks,'' \emph{Random Struct. Algorithms}, vol.~1, no.~4, pp. 447--460,
  1990.

\bibitem{DBLP:conf/spaa/Elsasser06}
R.~Els{\"a}sser, ``On the communication complexity of randomized broadcasting
  in random-like graphs,'' in \emph{SPAA}, P.~B. Gibbons and U.~Vishkin,
  Eds.\hskip 1em plus 0.5em minus 0.4em\relax ACM, 2006, pp. 148--157.

\bibitem{DBLP:conf/podc/BerenbrinkEF08}
P.~Berenbrink, R.~Els{\"a}sser, and T.~Friedetzky, ``Efficient randomised
  broadcasting in random regular networks with applications in peer-to-peer
  systems,'' in \emph{PODC}, R.~A. Bazzi and B.~Patt-Shamir, Eds.\hskip 1em
  plus 0.5em minus 0.4em\relax ACM, 2008, pp. 155--164.

\bibitem{1347167}
B.~Doerr, T.~Friedrich, and T.~Sauerwald, ``Quasirandom rumor spreading,'' in
  \emph{SODA '08: Proceedings of the nineteenth annual ACM-SIAM Symposium on
  Discrete Algorithms}.\hskip 1em plus 0.5em minus 0.4em\relax Philadelphia,
  PA, USA: Society for Industrial and Applied Mathematics, 2008, pp. 773--781.

\bibitem{ANACOMB}
P.~Flajolet and R.~Sedgewick, \emph{Analytic Combinatorics}, 1st~ed.\hskip 1em
  plus 0.5em minus 0.4em\relax Cambridge University Press, January 2008.

\bibitem{MMITZENMACHER}
M.~Mitzenmacher and E.~Upfal, \emph{Probability and Computing. Randomized
  Algorithms and Probabilistic Analysis}.\hskip 1em plus 0.5em minus
  0.4em\relax Cambridge University Press, 2005.

\end{thebibliography}
\end{document}